 \definecolor{BLACK}{gray}{0}
 \definecolor{WHITE}{gray}{1}
 \definecolor{RED}{rgb}{1,0,0}
 \definecolor{GREEN}{rgb}{0,1,0}
 \definecolor{BLUE}{rgb}{0,0,1}
 \definecolor{CYAN}{cmyk}{1,0,0,0}
 \definecolor{MAGENTA}{cmyk}{0,1,0,0}
 \definecolor{YELLOW}{cmyk}{0,0,1,0}
\newtheorem{theorem}{Theorem}
\newtheorem{proposition}[theorem]{Proposition}
\newtheorem{lem}[theorem]{Lemma}
\newenvironment{proof}[1][Proof]{\noindent\textbf{#1.} }{\ \rule{0.5em}{0.5em}}
\newcommand{\id}{\mathrm{id}}
\newcommand{\bi}{\mathbf{i}}
\newcommand{\bj}{\mathbf{j}}
\DeclareMathOperator{\Tr}{Tr}
\begin{document}

\title{Necessary and sufficient condition of separability for D-symmetric diagonal states}

\author{A. Rutkowski}

\affiliation{
 Institute of Theoretical Physics and Astrophysics,  National Quantum Information Centre\\  Faculty of Mathematics, Physics and Informatics, University of Gda\'{n}sk, 80-952 Gda\'{n}sk, Poland}

\author{M. Banacki and M. Marciniak}

\affiliation{
 Institute of Theoretical Physics and Astrophysics \\  Faculty of Mathematics, Physics and Informatics, University of Gda\'{n}sk, 80-952 Gda\'{n}sk, Poland}

\begin{abstract}
For multipartite states we consider a notion of D-symmetry. For a system of $N$ qubits it concides with usual permutational symmetry. In case of $N$ qudits ($d\geq 3$) the D-symmetry is stronger than the permutational one.
For the space of all D-symmetric vectors in $(\mathbb{C}^d)^{\otimes N}$ we define a basis composed of vectors $\{|R_{N,d;k}\rangle: \,0\leq k\leq N(d-1)\}$ which are analog of Dicke states. 
The aim of this paper is to discuss  the problem of separability of D-symmetric states which are diagonal in the basis $\{|R_{N,d;k}\rangle\}$. We show that if $N$ is even and $d\geq 2$ is arbitrary then a PPT property is necessary and sufficient condition of separability for D-invariant diagonal states. In this way we generalize results obtained in \cite{Yu2016,WoYe2014}. 
Our strategy is to use some classical mathematical results on a moment problem \cite{KrNu1977}.

\end{abstract}


\keywords{ppt state, separability, nonlocal correlations, entanglement }

\maketitle


\section{Introduction}
Quantum theory is the primary mainstay of our understanding and formal description of Nature.
The phenomenon of quantum correlations, especially entanglement, is believed to be most amazing and eluding the schemes of classical thinking.  Multiannual conceptual efforts to grapple the "spooky actions for separated distance systems" began with the fundamental work of Einsten, Podolski and Rosen \cite{EPR} and continue until this day. Despite of the knowledge we possess today, quantum correlations still remain a great mystery.  In particular, it is not easy to recognize the type of correlation we work with, even in a bipartite system.
There exist just a few criteria to detect entanglement. The most famous is Peres-Horodecki criterion \cite{HHH1996, Peres1996} which states that if state is non PPT then it is entanglement (PPT-positive partial transposition criteria). It turns out that this is necessary and sufficient condition only for the low dimensional systems, i.e. $\mathbb{C}^2\otimes\mathbb{C}^2$ and $\mathbb{C}^2\otimes\mathbb{C}^3$.

In last decade the problem of separability of permutationally symmetric states was intensively analyzed \cite{ToGu2009,Toth2010a,Toth2010b,WoYe2014,Yu2016}. The question how to define generalized Dicke states for qudits arises. Some authors \cite{Tura2018} consider a naturally choosen basis for the bosonic subspace of $(\mathbb{C}^d)^{\otimes N}$ in this context. We propose another approach. Instead of the full bosonic subspace of permutationally symetric vectors in the tensor product we define a smaller subspace of D-symmetric vectors. It is defined as the image of a D-symmetrizator $P_\mathrm{D}$ which has the property $P_\mathrm{D}P_\mathrm{S}=P_\mathrm{S}P_\mathrm{D}=P_\mathrm{D}$, where $P_\mathrm{S}$ stands for the usuall symmetrizator. It follows that for $d=2$ the space of D-symmetric vectors is nothing but the bosonic subspace of $(\mathbb{C}^d)^{\otimes N}$. Moreover, it is possible to define a basis of the D-symmetric subspace which for $d=2$ coincides with the basis of Dicke states.  
In next section the details of our construction can be found as well as some possible physical motivation for this. 

It was observed by several authors \cite{Toth2010b,WoYe2014} that there is a strong connection between separability and PPT property for mixtures of Dicke states. It was even conjectured that PPT property is sufficient for separability of such states. It turns out to be true for qubits \cite{Yu2016}. For general qudits, when one considers the basis of bosonic subspace, it is no longer true \cite{Tura2018}. The aim of our paper is to show that the conjecture is still true when one considers the basis of D-symmetric states as a generalization of Dicke states.

The paper is organized as follow. Firstly, the notion of D-symmetry is discussed (Section \ref{ss:D-inv}). We also define a D-symmetric analog of Dicke states. In Section \ref{s:separability} we recall the seprability problem and formulate the appropriate notion of PPT property in the context of multipartite systems. Section \ref{s:D-sep} is devoted to characterization of D-symmetric separable states, while in Section \ref{s:D-wit} we provide a description of entanglement witnesses for D-symmetric systems. Next, we formulate the generalized moment problem and recall a description of the complete solution of it (Section \ref{s:moment}). In Section \ref{s:PPT} we provide conditions characterizing diagonal restricted Dicke states satisfying PPT property, while in Section \ref{s:sepDicke} we characterize separable diagonal Dicke states. Finally, in Section \ref{s:sepPPT} we formulate main theorem (Theorem 9) which states for an even number of qudits sperability of diagonal restricted Dicke states is equivalent to PPT property with respect to the half of subsystems.


\section{D-symmetry of multipartite states}
\label{ss:D-inv}
Let $d\geq 2$ and $N\geq 2$ be fixed numbers.
For $0\leq i_1,\ldots,i_N\leq d-1$ the $N$-tuple $(i_1,\ldots,i_N)$ will be denoted shortly by $\mathbold{i}$.  
Let $|\mathbold{i}|=i_1+\ldots+i_N$. For an $N$ tuple $\mathbold{i}$ and a number $k\in\{0,1,\ldots, N(d-1)\}$ we will write $\mathbold{i}\vdash k$ when $|\mathbold{i}|=k$.
We define
$
{N\choose k}_d=\#\{\mathbold{i}:\,\mathbold{i}\vdash k\}
$ for $k=0,1,\ldots N(d-1)$. 
Observe that ${N\choose k}_2={N\choose k}$ where ${N\choose k}$ stands for the usual Newton binomial coefficient. There is no simple formula for ${N\choose k}_d$ for $d>2$. However, the following recurrence formula is valid
\begin{equation}
{N \choose k}_d=\sum_{j=0}^{\min\{ k,d-1 \}  } {N-1 \choose k-j}_d
\end{equation}
This is a generalization of the basic formula determining the Pascal triangle for $d=2$.

For an $N$-tuple $\mathbold{i}=(i_1,\ldots,i_N)$ let $|\mathbold{i}\rangle=|i_1,\ldots,i_N\rangle$ be an element of the standard orthonormal basis in $(\mathbb{C}^d)^{\otimes N}$. By $P_\mathrm{S}$ we denote standard symmetrizator acting on $(\mathbb{C}^d)^{\otimes N}$, i.e. a projection defined by 
$P_S|\mathbold{i}\rangle = (N!)^{-1}\sum_{\sigma\in S_N}|\sigma(\mathbold{i})\rangle$ where $S_N$ denotes the group of permutations of the set $\{1,2,\ldots,N\}$ and $\sigma(\mathbold{i})=(i_{\sigma(1)},i_{\sigma(2)},\ldots,i_{\sigma(N)})$ for $\sigma\in S_N$. The image of $P_\mathrm{S}$ is the bosonic subspace of $(\mathbb{C}^d)^{\otimes N}$ and it is denoted by $\left((\mathbb{C}^d)^{\otimes N}\right)_{\mathrm{sym}}$. For $d=2$ one considers the basis $(|D_{N,k}\rangle)_{k=0,1,\ldots,N}$ of $\left((\mathbb{C}^2)^{\otimes N}\right)_{\mathrm{sym}}$, where
\begin{equation}
\label{Dicke}
|D_{N;k}\rangle={N\choose k}P_\mathrm{S}\left(|0\rangle^{\otimes (N-k)}\otimes |1\rangle^{\otimes k}\right) .
\end{equation}
Following \cite{WoYe2014,Yu2016} the elements of the basis are called (unnormalized) Dicke states.
This notion was generalized in \cite{Tura2018} for an arbitrary dimension $d\geq 2$. In this case Dicke states are parametrized by systems $(k_0,\ldots,k_{d-1})$ of nonnegative integers such that $k_0+k_1+\ldots +k_{d-1}=N$, and
\begin{eqnarray}
\lefteqn{|D_{N,d;k_0,\ldots,k_{d-1}}\rangle=}\\
&=&{N\choose k_0,\ldots,k_{d-1}}P_\mathrm{S}\left(|0\rangle^{\otimes k_0}\otimes\ldots\otimes|d-1\rangle^{\otimes k_{d-1}}\right) \nonumber
\end{eqnarray}


We will consider a subspace $\left((\mathbb{C}^d)^{\otimes N}\right)_{\mathrm{D}}\subset\left((\mathbb{C}^d)^{\otimes N}\right)_{\mathrm{sym}}$ which is defined as an image of a projection $P_\mathrm{D}$ defined by
\begin{equation}
\label{PD}
P_\mathrm{D}|\mathbold{i}\rangle = {N\choose |\bi|}_d^{-1}\sum_{{\mathbold{j} \vdash|\mathbold{i}|}}|\mathbold{j}\rangle ,
\end{equation}
where $|\mathbold{i}\rangle$ is any element of the standard basis of $(\mathbb{C}^d)^{\otimes N}$. 
Obviously 
\begin{equation}
\label{commut}
P_\mathrm{D}P_\mathrm{S}=P_\mathrm{S}P_\mathrm{D}=P_\mathrm{D}.
\end{equation}
For fixed $N$ and $d$, we consider a basis of $\left((\mathbb{C}^d)^{\otimes N}\right)_{\mathrm{D}}$ composed of vectors $|R_{N,d;k}\rangle$, $k=0,1,2,\ldots, N(d-1)$ defined by 
\begin{equation}
|R_{N,d;k}\rangle = 
\sum_{\bi\vdash k} |\bi \rangle
\end{equation}
Elements $|R_{N,d;k}\rangle$ will be called restricted (unnormalized) Dicke states. One can easily observe that $|R_{N,2;k}\rangle=|D_{N,k}\rangle$ where $|D_{N,k}\rangle$ are defined as in \eqref{Dicke}.  
Therefore, restricted Dicke states can be regarded as a generalization of $N$-qubit Dicke states, which is  diffrent from \cite{Tura2018}.

By $|\widetilde{R_{N,d;k}}\rangle$ we will denote elements of the dual basis to the basis of restricted Dicke states. One easily checks that
\begin{equation}
\label{dual}
|\widetilde{R_{N,d;k}}\rangle={N \choose k}_d^{-1}\sum_{\bi\vdash k} |\bi \rangle={N\choose k}_d^{-1}|R_{N,d:k}\rangle.
\end{equation}

Assume that a system is composed of $N$ bosons with $d$ levels of excitation each. We make an assumption that subsequent levels differ by a fixed value. Then $|R_{N,d;k}\rangle$ can be interpreted as such a state of the system that the total number of excitations in all bosons is equal to $k$. It can be used to model systems of bosons concentarted in a small area which behave as single particle and only total energy can be recognized. Such models were used to explain the notion of superradiance in quantum optics \cite{Dicke,GH}

By $\mathfrak{D}$ we denote the class of states $\rho$ which satisfy the condition $\rho= P_\mathrm{D}\rho P_\mathrm{D}$. We will address this property as D-symmetry of the state $\rho$. 
It is stronger than the permutational symmetry in the sense that each D-symmetric state is automatically a permutationally symmetric one.

Special attention is put on a class of diagonal Dicke (unnormalized) states \cite{WoYe2014,Yu2016,Tura2018,QRS}, i.e. states of the form
$ 
\rho
=\sum_{\{k_j\}
}p_{k_0,\ldots,k_{d-1}}|D_{N,d;k_0,\ldots,k_{d-1}}\rangle\langle D_{N,d;k_0,\ldots,k_{d-1}}| ,
$ 
where the summation is over all systems $\{k_j\}$ of integers such that $j=0,1,\ldots d-1$, $k_j\geq 0$,   $\sum_jk_j=N$, and numbers $p_{k_0,\ldots,k_{d-1}}$ are nonnegative.

The natural analogs among D-symmetric states are diagonal restricted Dicke states which are of the form
\begin{equation}
\label{rDd}
\rho=\sum_{k=0}^{N(d-1)}p_k|R_{N,d;k}\rangle\langle R_{N,d;k}|
\end{equation}
for $p_k\geq 0$.

\section{Separability problem}
\label{s:separability}

%
%
We briefly recall main notions and concepts concerning the separability problem of states. Let $\mathcal{H}=\mathcal{H}_1\otimes \ldots \otimes \mathcal{H}_N$ be a composite system where each subsystem is represented by a Hilbert space $\mathcal{H}_j$, $j=1,2,\ldots,N$. Let  $\rho$ be an (unnormalized) state i.e. $\rho\in\mathcal{B}(\mathcal{H})$ is a positive semidefite operator. We say that $\rho$ is fully separable (or shortly separable)  if $\rho=\sum_{i}p_{i}\rho_{i}^1\otimes \ldots \otimes\rho_{i}^N$,
for some states $\rho_{i}^j$  on $\mathcal{H}_j$, $j=1,\ldots,N$, and some positive numbers $p_{i}$. 
Whenever $\rho$ does not satisfy this condition it said to be an entangled state.
In general, it is very difficult to check whether a state is separable.

We say that a Hermitian operator $W$ on $\mathcal{H}_1\otimes\ldots\mathcal{H}_N$ is an entanglement witness if $$\langle \psi_1\otimes\ldots\otimes \psi_N|W|\psi_1\otimes\ldots\otimes \psi_N\rangle\geq 0$$ for every $\psi_1,\ldots,\psi_N$ such that $\psi_j\in\mathcal{H}_j$, but $W$ is not a positive operator. A state $\rho$ is entangled if and only if $\Tr(\rho W)<0$ for some entanglement witness $W$. In this case we say that $W$ detects the entanglement of $\rho$.

Let $T_j$ denote that transposition on the algebra $\mathcal{B}(\mathcal{H}_j)$. A bipartite state $\rho$ is said to posses a PPT property if $(T_1\otimes \mathrm{id}_{\mathcal{H}_2})\rho$ is a also a state. We generalize this property to the multipartite case. Let $(m_1,\ldots,m_N)\in\{0,1\}^N$ be a binary system of the lenght $N$. We say that a state $\rho$ on $\mathcal{H}_1\otimes\ldots\otimes\mathcal{H}_N$ has a $(m_1,\ldots,m_N)$-PPT property if $(T_1^{m_1}\otimes\ldots\otimes T_N^{m_N})\rho$ is also a state, where $T_j^0=\mathrm{id}_j$ and $T_j^1=T_j$, i.e. all $1$'s in the system $(m_1,\ldots,m_N)$ mark subsytems which are transposed. Clearly, if a state $\rho$ is separable then it has a $(m_1,\ldots,m_n)$-PPT property for every binary system $(m_1,\ldots,m_n)$. In general, the converse implication is not true unless $N=2$ and the pair $(\mathcal{H}_1,\mathcal{H}_2)$ is one of the following: $(\mathbb{C}^2,\mathbb{C}^2),  (\mathbb{C}^2,\mathbb{C}^3), (\mathbb{C}^3,\mathbb{C}^2)$.

In spite of this general statement there are classes of states such that PPT property implies separability within them. For example, as was shown in \cite{Yu2016} the class of diagonal Dicke states for $d=2$ has this property. It was conjectured in \cite{WoYe2014}. A natural question arises whether the same is for the set of Dicke states for any dimension $d$ of the underlying Hilbert space. It turns out that for $d>2$ it is not the case \cite{Tura2018}. 
The aim of our paper is to show that for the class of diagonal restricted Dicke states PPT property implies separability.

\section{Separable D-symmetric states}
\label{s:D-sep}
The aim of this section is to characterize separable D-symmetric states.

Let $\rho$ be a separable state of the form 
\begin{equation}
\label{ropure}
\rho=\sum_{\alpha=1}^n\lambda_\alpha p_\alpha^1\otimes\ldots\otimes p_\alpha^N,
\end{equation}
where $p_\alpha^i=|\xi_\alpha^i\rangle\langle\xi_\alpha^i|$, $i=1,\ldots,N$, $\alpha=1,\ldots,n$, for some vectors $\xi_\alpha^i\in H$ such that $\Vert\xi_\alpha^i\Vert=1$. 
The result established in the next proposition is probably well known. However, we didn't find any reference with a complete proof. Hence, we provide the proof for the readers convenience.
\begin{proposition}
\label{sepper}
Assume that $\rho$ given by \eqref{ropure} is permutationally symmetric, i.e.
$ 
\rho = P_\mathrm{S}\rho P_\mathrm{S}.
$ 
If all coefficients $\lambda_\alpha$ in \eqref{ropure} are strictly positive then $p_\alpha^i=p_\alpha^j$ for every $\alpha=1,\ldots,n$ and $i,j=1,\ldots,N$.
\end{proposition}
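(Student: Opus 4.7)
The strategy is to extract each individual product vector $|\xi_\alpha^1\otimes\cdots\otimes\xi_\alpha^N\rangle$ as a symmetric vector, and then show that a product vector lying in the symmetric subspace must have all its factors proportional.

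First I would observe that the hypothesis $\rho=P_\mathrm{S}\rho P_\mathrm{S}$ is equivalent to saying that the range of $\rho$ is contained in $\bigl((\mathbb{C}^d)^{\otimes N}\bigr)_{\mathrm{sym}}$. Indeed, from $\rho=P_\mathrm{S}\rho P_\mathrm{S}$ one gets $P_\mathrm{S}\rho=\rho$, so every $\rho\psi$ lies in the image of $P_\mathrm{S}$. Then I would use the standard fact that for a positive operator of the form $\rho=\sum_\alpha \lambda_\alpha |v_\alpha\rangle\langle v_\alpha|$ with $\lambda_\alpha>0$, the range of $\rho$ coincides with $\mathrm{span}\{v_\alpha\}$. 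Setting $v_\alpha=\xi_\alpha^1\otimes\cdots\otimes\xi_\alpha^N$, this yields that every product vector $v_\alpha$ belongs to the symmetric subspace.

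Next I would translate symmetry of $v_\alpha$ into an equality of factors. For the transposition $\tau_{ij}\in S_N$ of indices $i$ and $j$, the permutation unitary $U_{\tau_{ij}}$ acts on $v_\alpha$ by swapping the $i$-th and $j$-th tensor factors, and $v_\alpha\in\bigl((\mathbb{C}^d)^{\otimes N}\bigr)_{\mathrm{sym}}$ forces $U_{\tau_{ij}}v_\alpha=v_\alpha$. After factoring out the tensor factors at positions different from $i,j$ (which are common unit vectors on both sides), this reduces to $\xi_\alpha^i\otimes\xi_\alpha^j=\xi_\alpha^j\otimes\xi_\alpha^i$ in $\mathbb{C}^d\otimes\mathbb{C}^d$. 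Taking the partial trace over the second tensor factor on both sides of the rank-one operator $|\xi_\alpha^i\otimes\xi_\alpha^j\rangle\langle\xi_\alpha^i\otimes\xi_\alpha^j|=|\xi_\alpha^j\otimes\xi_\alpha^i\rangle\langle\xi_\alpha^j\otimes\xi_\alpha^i|$ and using $\Vert\xi_\alpha^j\Vert=\Vert\xi_\alpha^i\Vert=1$ gives $|\xi_\alpha^i\rangle\langle\xi_\alpha^i|=|\xi_\alpha^j\rangle\langle\xi_\alpha^j|$, which is precisely $p_\alpha^i=p_\alpha^j$.

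I do not anticipate a real obstacle here; the only points requiring care are (i) interpreting $\rho=P_\mathrm{S}\rho P_\mathrm{S}$ as a containment of ranges (which crucially uses positivity of $\rho$ together with positivity of all $\lambda_\alpha$, so that no cancellation can occur and each $v_\alpha$ is individually in the range), and (ii) passing from the vector identity $\xi\otimes\eta=\eta\otimes\xi$ to the projector identity $|\xi\rangle\langle\xi|=|\eta\rangle\langle\eta|$ cleanly, which the partial-trace argument above handles without any need to track phases of the $\xi_\alpha^i$ separately.
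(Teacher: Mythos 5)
Your proposal is correct. The first half is the same idea as the paper's, just phrased differently: the paper notes that $\langle\eta,\rho\,\eta\rangle=0$ for all $\eta$ orthogonal to the symmetric subspace and uses nonnegativity of each term $\lambda_\alpha\langle\eta,\rho_\alpha\eta\rangle$ to kill them individually, which is exactly the "no cancellation" content of your range argument $\mathrm{range}(\rho)=\mathrm{span}\{v_\alpha\}\subseteq\bigl((\mathbb{C}^d)^{\otimes N}\bigr)_{\mathrm{sym}}$. The second half genuinely differs. The paper tests $\rho_\alpha$ against the specific vector $\eta_\alpha^{i,j}=v_\alpha-U_{\tau_{ij}}v_\alpha$, computes $\langle\eta_\alpha^{i,j},\rho_\alpha\eta_\alpha^{i,j}\rangle=0$ to get $|\langle\xi_\alpha^i,\xi_\alpha^j\rangle|=1$, and invokes the equality case of Cauchy--Schwarz; you instead use the vector identity $U_{\tau_{ij}}v_\alpha=v_\alpha$, contract the spectator factors to reduce to $\xi_\alpha^i\otimes\xi_\alpha^j=\xi_\alpha^j\otimes\xi_\alpha^i$, and take a partial trace to equate the projectors. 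Your route has the small advantage of never needing to discuss phases or the equality case of Cauchy--Schwarz, at the cost of introducing the swap unitaries and the contraction step explicitly; the paper's route is a single inner-product computation. Both are complete and elementary, and either would serve here.
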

\begin{proof}
Let $\rho_\alpha=p_\alpha^1\otimes\ldots\otimes p_\alpha^N$. 
Observe that for a projection $P$, a selfadjoint operator satisfies $A=PAP$ is and only if $\langle\eta,A\eta\rangle=0$ for every $\eta\in (PH)^\perp$. 
So, if $\eta\in\left((\mathbb{C}^d)^{\otimes N}\right)_{\mathrm{sym}}^\perp$ then $\langle \eta,\rho\eta\rangle=0$, and consequently $\sum_\alpha\lambda_\alpha\langle\eta,\rho_\alpha\eta\rangle=0$. Each term in the sum is nonnegative, hence $\lambda_\alpha\langle\eta, \rho_\alpha\eta\rangle=0$ for every $\alpha$. Since $\lambda_\alpha$ are nonzero, $\langle\eta, \rho_\alpha\eta\rangle=0$. 

Now, fix $\alpha$ and a pair $i,j$ where $1\leq i<j\leq N$, and let
\begin{eqnarray*}
\eta_\alpha^{i,j}&=&
|\xi_\alpha^1,\ldots,\xi_\alpha^i,\ldots,\xi_\alpha^j,\ldots,\xi_\alpha^N \rangle\\
&& {}
- |\xi_\alpha^1,\ldots,\xi_\alpha^j, \ldots, \xi_\alpha^i, \ldots , \xi_\alpha^N\rangle.
\end{eqnarray*}
Then $\eta_\alpha^{i,j}\in\left((\mathbb{C}^d)^{\otimes N}\right)_{\mathrm{sym}}^\perp$, and
$ 
0=\langle \eta_\alpha^{i,j},\rho_\alpha \eta_{\alpha}^{i,j} \rangle= 1-|\langle\xi_\alpha^i,\xi_\alpha^j\rangle|^2
$. 
The equality $|\langle\xi_\alpha^i,\xi_\alpha^j\rangle|=1$ implies that $\xi_\alpha^i$ and $\xi_\alpha^j$ are linearly dependent, so $|\xi_\alpha^i\rangle\langle\xi_\alpha^i|=|\xi_\alpha^j\rangle\langle\xi_\alpha^j|$.
\end{proof}

Now, assume that the state $\rho$ given by \eqref{ropure} is D-symmetric. Due to \eqref{commut} the state $\rho$ is permutationally symmetric. Hence, by Proposition \ref{sepper}, it is of the form
\begin{equation}
\label{rosymm}
\rho=\sum_\alpha \lambda_\alpha |\xi_\alpha\rangle\langle\xi_\alpha|^{\otimes N}
\end{equation}
for some vectors $\xi_\alpha\in H$, $\alpha=1,\ldots,n$. We show that D-symmetry implies a very special form of vectors $\xi_\alpha$.

\begin{proposition}
\label{p:Dpure}
Assume that $\rho$ given by \eqref{rosymm} is D-symmetric, i.e.
$
\rho = P_\mathrm{D}\rho P_\mathrm{D}.
$ 
Then for each $\alpha=1,\ldots,n$, either
\begin{equation}
\label{Dsep1}
|\xi_\alpha\rangle=|d-1\rangle
\end{equation}
 or there is a number $z
\in\mathbb{C}$ such that 
\begin{equation}
\label{Dsep2}
|\xi_\alpha\rangle= C_{z
}\sum_{i=0}^{d-1}z
^i|i\rangle,
\end{equation}
where $C_z=\left(\sum_{i=0}^{d-1}|z|^2\right)^{-1/2}$ for $z\in\mathbb{C}$.
\end{proposition}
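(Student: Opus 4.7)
The plan is to mimic the strategy used in the proof of Proposition \ref{sepper}. Since $P_\mathrm{D}$ is a projection, the condition $\rho = P_\mathrm{D}\rho P_\mathrm{D}$ is equivalent to $\langle\eta,\rho\eta\rangle = 0$ for every $\eta \in \left((\mathbb{C}^d)^{\otimes N}\right)_{\mathrm{D}}^\perp$. With all $\lambda_\alpha$ strictly positive and each $\rho_\alpha := |\xi_\alpha\rangle\langle\xi_\alpha|^{\otimes N}$ positive semidefinite, this forces $\langle\eta,\rho_\alpha\eta\rangle = 0$ for every such $\eta$ and every $\alpha$, hence $P_\mathrm{D}\rho_\alpha P_\mathrm{D} = \rho_\alpha$. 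For a rank-one operator $|\psi\rangle\langle\psi|$ this last equality reads $|P_\mathrm{D}\psi\rangle\langle P_\mathrm{D}\psi| = |\psi\rangle\langle\psi|$, which implies $\|P_\mathrm{D}\psi\| = \|\psi\|$ and therefore $\psi = |\xi_\alpha\rangle^{\otimes N}$ itself lies in the image of $P_\mathrm{D}$.

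Next I would translate this into an algebraic condition on the coefficients of $\xi_\alpha$. Expanding $|\xi_\alpha\rangle = \sum_{i=0}^{d-1} a_i |i\rangle$, one has $|\xi_\alpha\rangle^{\otimes N} = \sum_{\mathbf{i}} a_{i_1}\cdots a_{i_N}\,|\mathbf{i}\rangle$, and the formula \eqref{PD} shows that a vector lies in the image of $P_\mathrm{D}$ iff its coefficient at $|\mathbf{i}\rangle$ depends only on $|\mathbf{i}|$. Therefore
\[
a_{i_1}\cdots a_{i_N} = a_{j_1}\cdots a_{j_N}\quad\text{whenever}\ |\mathbf{i}| = |\mathbf{j}|.
\]
Specializing to tuples that differ only in two entries, namely $(m,n)$ versus $(m+1,n-1)$, with the remaining $N-2$ entries fixed to some common value $i^*$ chosen so that $a_{i^*}\neq 0$, I obtain the core identity
\[
a_m a_n = a_{m+1}a_{n-1},\qquad 0\leq m\leq d-2,\ 1\leq n\leq d-1.
\]

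Finally, I would conclude with a short case analysis on whether $a_0$ vanishes. If $a_0\neq 0$, set $z = a_1/a_0$; taking $m=0$ and $n=k$ in the identity gives $a_k = z\,a_{k-1}$, so by induction $a_k = a_0 z^k$ for every $k$, and after choice of phase and normalization one recovers \eqref{Dsep2}. If $a_0 = 0$, let $m_0$ be the smallest index with $a_{m_0}\neq 0$; then $a_{m_0-1} = 0$, and the identity with $m = m_0-1$ and $n\in\{1,\ldots,d-1\}$, together with $a_{m_0}\neq 0$, forces $a_{n-1} = 0$ for every such $n$, i.e.\ $a_0 = \cdots = a_{d-2} = 0$. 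Hence $m_0 = d-1$ and, after normalization, $|\xi_\alpha\rangle = |d-1\rangle$ as in \eqref{Dsep1}.

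The one step that requires some care, rather than being a genuine obstacle, is the passage from $P_\mathrm{D}\rho_\alpha P_\mathrm{D} = \rho_\alpha$ to the pure-vector statement $|\xi_\alpha\rangle^{\otimes N}\in\mathrm{Im}(P_\mathrm{D})$; once that is in place, the remaining analysis is a direct manipulation of the multiplicative relation among the $a_i$.
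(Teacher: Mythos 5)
Your proof is correct and follows essentially the same route as the paper: reduce to $P_\mathrm{D}\rho_\alpha P_\mathrm{D}=\rho_\alpha$, deduce that the coefficients of $|\xi_\alpha\rangle^{\otimes N}$ depend only on the index sum, and run the same case analysis on whether $a_0$ vanishes. Your passage via $P_\mathrm{D}\psi=\psi$ for the rank-one operator is a slight streamlining of the paper's comparison of matrix elements $w_{\mathbf{i}}\overline{w_{\mathbf{j}}}$, but the substance is identical.
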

\begin{proof}
Arguing as in the proof of Proposition \ref{sepper} one shows $|\xi_\alpha\rangle\langle\xi_\alpha|^{\otimes N}=P_\mathrm{D} |\xi_\alpha\rangle\langle\xi_\alpha|^{\otimes N} P_\mathrm{D}$ for every $\alpha$.
Fix $\alpha$ and let $|\xi_\alpha\rangle=\sum_{i=0}^{d-1} w_i|i\rangle$. Then
\begin{equation}
\rho=\sum_{\bi,\bj}w_\bi\overline{w_\bj}|\bi\rangle\langle\bj|
\end{equation}
where $w_\bi=w_{i_1}w_{i_2}\ldots w_{i_N}$ for an $N$-tuple $\bi=(i_1,\ldots,i_N)$, and the summation is over all pairs of $N$-tuples $\bi,\bj$. 

We show that D-symmetry implies  
the following condition:
\begin{equation}
\label{for}
\forall \bi,\bi':\;|\bi|=|\bi'|\quad\Rightarrow\quad \xi_\bi=\xi_{\bi'} .
\end{equation}
It follows from \eqref{PD} that
$$P_\mathrm{D}|\xi_\alpha\rangle\langle\xi_\alpha|^{\otimes N} P_\mathrm{D}=
\sum_{\bi,\,\bj} y_{\bi,\bj} |\bi\rangle\langle\bj|
$$
where coefficients
$$y_{\bi,\bj}={N\choose |\bi|}_d^{-1} {N\choose |\bj|}_d^{-1}   \sum_{\substack{\bi', \bj' \\ |\bi'|=|\bi|,\, |\bj'|=|\bj|}} w_{\bi'}\overline{w_{\bj'}}$$
depend on numbers $|\bi|$ and $|\bj|$ only. 
Since $|\xi_\alpha\rangle\langle\xi_\alpha|^{\otimes N}=P_\mathrm{D}|\xi_\alpha\rangle\langle\xi_\alpha|^{\otimes N} P_\mathrm{D}$, for any $N$-tuples  $\bi$, $\bi'$, $\bj$, $\bj'$, the equalities $|\bi|=|\bi'|$ and $|\bj|=|\bj'|$ imply $w_\bi\overline{w_\bj}=w_{\bi'}\overline{w_{\bj'}}$. 
If $w_\bi=0$ for every $\bi$ then the condition \eqref{for} is satisfied. Assume that $w_{\bi_0}\neq 0$ for some $N$-tuple $\bi_0$. If $|\bi|=|\bi'|$, then $w_{\bi}\overline{w_{\bi_0}}=w_{\bi'}\overline{w_{\bi_0}}$, hence $w_{\bi}=w_{\bi'}$ and \eqref{for} is proved.

Now, consider two cases. Firstly, assume $w_0=0$. Condition \eqref{for} implies equality $w_i^N=w_i^{N-2}w_{i+1}w_{i-1}$ for every $i=1,\ldots, d-2$. Hence, by induction one shows that $w_i=0$ for every $i=0,1,\ldots, d-2$, and \eqref{Dsep1} follows. Secondly, consider the case $w_0\neq 0$. Again from \eqref{for} we get $w_iw_0^{N-1}=w_{i-1}w_1w_0^{N-2}$ for every $i=1,2,\ldots, d-1$. Hence $w_i=zw_{i-1}$ for $i=1,2,\ldots,d-1$, where $z=w_1/w_0$, and \eqref{Dsep2} follows.
\end{proof}
%

\section{Entanglement witnesses for D-symmetric systems}
\label{s:D-wit}
In this subsection we deal with the approach to entanglement witnesses for a bosonic systems presented in \cite{Yu2016}.  We say that a Hermitian operator $W\in B((\mathbb{C}^d)^{\otimes N})$ is an entaglement witness for the D-symmetric system if $W=P_\mathrm{D}WP_\mathrm{D}$ and $\Tr(W\sigma)\geq 0$ for all pure separable D-symmetric states. Observe that due to results from previous section every pure separable D-symmetric state is of the form
\begin{equation}
\sigma_z=C_z^N \sum_{\bi,\bj}z^{|\bi|}\overline{z}^{|\bj|}|\bi\rangle\langle\bj|
\end{equation}
for some $z\in\mathbb{C}$.

The following proposition is a simple consequence of the hyperplane separation theorem (see  \cite[Proposition 1]{Yu2016}). 
\begin{proposition}
\label{p:ew}
A D-symmetric state $\rho$ is separable if and only if $\Tr(W\rho)\geq 0$ for every entanglement witness $W$ for the D-symmetric system.
\end{proposition}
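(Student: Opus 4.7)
The plan is to prove the two directions separately. One direction is immediate from linearity; the other is a textbook Hahn--Banach separation applied to an appropriately chosen convex cone of D-symmetric separable states.

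For the easy direction, suppose $\rho$ is a separable D-symmetric state. Combining Propositions \ref{sepper} and \ref{p:Dpure}, $\rho$ admits a decomposition $\rho=\sum_\alpha \lambda_\alpha \sigma_\alpha$ where each $\sigma_\alpha$ is either $\sigma_{z_\alpha}$ for some $z_\alpha\in\mathbb{C}$ or the boundary state $|d-1\rangle\langle d-1|^{\otimes N}$. For any entanglement witness $W$ for the D-symmetric system, $\Tr(W\sigma_\alpha)\geq 0$ by definition, so $\Tr(W\rho)\geq 0$ by linearity.

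For the converse, let $V$ denote the finite-dimensional real vector space of self-adjoint operators $A$ on $(\mathbb{C}^d)^{\otimes N}$ satisfying $A=P_\mathrm{D}AP_\mathrm{D}$, equipped with the Hilbert--Schmidt pairing $(A,B)\mapsto \Tr(AB)$. Let $\mathcal{K}\subset V$ be the convex cone generated by the D-symmetric pure separable states $\{\sigma_z:z\in\mathbb{C}\}\cup\{|d-1\rangle\langle d-1|^{\otimes N}\}$; by Proposition \ref{p:Dpure}, $\mathcal{K}$ coincides with the set of unnormalized D-symmetric separable states. If $\rho$ is not separable then $\rho\notin\mathcal{K}$, and I would apply the Hahn--Banach separation theorem for closed convex cones to produce a linear functional on $V$ that is nonnegative on $\mathcal{K}$ and strictly negative at $\rho$. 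Since $V$ is finite-dimensional, this functional has the form $A\mapsto \Tr(WA)$ for a unique $W\in V$. By construction $W=P_\mathrm{D}WP_\mathrm{D}$ and $\Tr(W\sigma_z)\geq 0$ for every pure D-symmetric separable state, so $W$ is an entanglement witness of the required type and it detects $\rho$.

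The main obstacle is verifying that $\mathcal{K}$ is closed, which is what delivers the strict inequality $\Tr(W\rho)<0$ rather than only $\Tr(W\rho)\leq 0$. My approach is a compactness argument: the generating set of $\mathcal{K}$ is the continuous image of the Riemann sphere $\mathbb{C}\cup\{\infty\}$, with $z=\infty$ corresponding to $|d-1\rangle\langle d-1|^{\otimes N}$, realized as the limit of the normalized $\sigma_z$ as $|z|\to\infty$. The generating set is therefore compact, its convex hull is compact, and the cone it spans is closed, which is precisely what the separation theorem requires.
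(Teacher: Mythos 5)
Your proof is correct and follows exactly the route the paper intends: the paper offers no written proof of Proposition \ref{p:ew}, merely calling it ``a simple consequence of the hyperplane separation theorem'' (citing Proposition 1 of \cite{Yu2016}), and your argument is precisely that separation argument carried out in full, including the one point that actually needs checking --- closedness of the cone of separable D-symmetric states, which you obtain by compactifying $z\mapsto\sigma_z$ over the Riemann sphere. The only detail worth adding is that the cone over a compact convex set is closed because the generators all have trace one, so the origin is not in their convex hull.
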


Let us remind that  $|\widetilde{R_{N,d;k}}\rangle$ denote dual vectors to $|R_{N,d;k}\rangle$, see \eqref{dual}.
\begin{proposition}
\label{DEW}
Let $n_1=\left\lfloor \frac{N(d-1)}{2}\right\rfloor$ and $n_2=\left\lfloor\frac{N(d-1)-1}{2}\right\rfloor$.
Let two systems $(s_k)_{0\leq k\leq n_1}$ and $(t_k)_{0\leq k\leq n_2}$ of complex numbers be given. Define
\begin{equation}
\label{Ws}
V_{(s)}=\sum_{k,l=0}^{n_1}s_k\overline{s_l}|\widetilde{R_{N,d;k+l}}\rangle\langle\widetilde{R_{N,d;k+l}}|
\end{equation}
\begin{equation}
\label{Wt}
U_{(t)}=\sum_{k,l=0}^{n_1}t_k\overline{t_l}|\widetilde{R_{N,d;k+l+1}}\rangle\langle\widetilde{R_{N,d;k+l+1}}|
\end{equation}
Then $V_{(s)}$ and $U_{(t)}$ are entanglement witnesses for D-symmetric systems.
\end{proposition}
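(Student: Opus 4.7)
The plan is to verify the two defining properties of an entanglement witness for a D-symmetric system (as stated at the start of this section) separately: that $W = P_{\mathrm{D}} W P_{\mathrm{D}}$ and that $\Tr(W\sigma)\geq 0$ for every pure separable D-symmetric state $\sigma$. Hermiticity of $V_{(s)}$ and $U_{(t)}$ is immediate from the conjugation symmetry of the coefficients $s_k\overline{s_l}$ and $t_k\overline{t_l}$ under $(k,l)\mapsto(l,k)$. The D-symmetry condition is also routine: by (\ref{dual}) each dual vector $|\widetilde{R_{N,d;m}}\rangle$ is a scalar multiple of $|R_{N,d;m}\rangle$, hence lies in the range of $P_{\mathrm{D}}$, so every rank-one summand $|\widetilde{R_{N,d;m}}\rangle\langle\widetilde{R_{N,d;m}}|$ is invariant under conjugation by $P_{\mathrm{D}}$ on both sides.

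For positivity on pure separable D-symmetric states, Proposition \ref{p:Dpure} tells us that these are (up to one corner case) the states $\sigma_z$ introduced at the start of this section. The central computation is the diagonal matrix element
\begin{equation*}
\langle\widetilde{R_{N,d;m}}|\sigma_z|\widetilde{R_{N,d;m}}\rangle = C_z^{2N}|z|^{2m}.
\end{equation*}
This reduces to counting: $|\widetilde{R_{N,d;m}}\rangle$ is supported on basis vectors $|\bi\rangle$ with $|\bi|=m$; each pair $(\bi,\bj)$ with $|\bi|=|\bj|=m$ contributes $z^m\overline{z}^m$ from $\sigma_z$; the sum produces a combinatorial factor ${N\choose m}_d^{2}$ that cancels the normalization ${N\choose m}_d^{-2}$ from the two dual vectors.

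Inserting this identity into the definitions and factorising the resulting double sum gives
\begin{equation*}
\Tr(V_{(s)}\sigma_z) = C_z^{2N}\left|\sum_{k=0}^{n_1}s_k|z|^{2k}\right|^{2} \geq 0,
\end{equation*}
and analogously $\Tr(U_{(t)}\sigma_z) = C_z^{2N}|z|^{2}\bigl|\sum_k t_k|z|^{2k}\bigr|^{2}\geq 0$. The remaining pure separable D-symmetric state $|d-1\rangle\langle d-1|^{\otimes N}$ from Proposition \ref{p:Dpure} can be handled either as the $z\to\infty$ limit of the above formulas or directly: only $m=N(d-1)$ contributes, producing $|s_{n_1}|^2$ (resp.\ $|t_{n_2}|^2$) or $0$ depending on the parity of $N(d-1)$. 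I expect no genuine obstacle here; the whole proposition is a direct computation whose only nontrivial step is the factorisation identity above.
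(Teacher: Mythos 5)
Your proposal is correct and follows essentially the same route as the paper's proof in Appendix A: D-symmetry from $P_{\mathrm{D}}|\widetilde{R_{N,d;k}}\rangle=|\widetilde{R_{N,d;k}}\rangle$, then positivity checked separately on the two families of pure separable D-symmetric states supplied by Proposition \ref{p:Dpure}, with the same factorisation of the double sum into $\bigl|\sum_k s_k|z|^{2k}\bigr|^2$. Organising the computation around the single matrix element $\langle\widetilde{R_{N,d;m}}|\sigma_z|\widetilde{R_{N,d;m}}\rangle=C_z^{2N}|z|^{2m}$ is only a bookkeeping variant of the paper's direct expansion.
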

We provide the proof of the above Proposition in Appendix A. 

\section{Moment problem}
\label{s:moment}
Herein, we will recall the concept of moment problem. It has been shown that methods of the generalized moment problem are useful in geometry of convex bodies, algebra and function theory (see for example \cite{KrNu1977}). 
Let $(p_k)_{k=0}^n$ be a finite sequence of real numbers. We say that the sequence $(p_k)$ is a solution of the generalized moment problem on the interval $[0,\infty)$ \cite{KrNu1977} if there exists a positive measure $\sigma$ with support contained in $[0,\infty)$ such that
\begin{equation}
\label{e:moment}
p_k=\begin{cases}
{\displaystyle \int_0^\infty t^k d\sigma(t)}, & k=0,1,\ldots,n-1, \\[4mm]
{\displaystyle \int_0^\infty t^n d\sigma(t) + M}, & k=n,
\end{cases}
\end{equation}
where $M\geq 0$.
Alternatively, we say that it is a solution of the strict moment problem on the interval $[0,\infty)$ if it is a solution of the generalized moment problem with $M=0$. The following theorem characterizes solutions of the generalized moment problem completely. It will be our main mathematical tool.  
\begin{theorem}[Chapter V in \cite{KrNu1977}]\label{t:moment}
A sequence $(p_k)_{k=0}^n$ is a solution of the generalized moment problem if and only if the following two Henkel matrices
\begin{equation*}
(p_{k+l})_{k,l=0}^{n_0}=\left(\begin{array}{ccccc}
p_0 & p_1 & p_2 & \cdots & p_{n_0} \\
p_1 & p_2 & p_3 & \cdots & p_{n_0+1} \\
p_2 & p_3 & p_4 & \cdots & p_{n_0+2} \\
\vdots & \vdots & \vdots & & \vdots \\
p_{n_0} & p_{n_0+1} & p_{n_0+2} & \cdots & p_{2n_0}
\end{array}\right),
\end{equation*}	
\begin{equation*}
(p_{k+l+1})_{k,l=0}^{n_1}=\left(\begin{array}{ccccc}
p_1 & p_2 & p_3 & \cdots & p_{n_1+1} \\
p_2 & p_3 & p_4 & \cdots & p_{n_1+2} \\
p_3 & p_4 & p_5 & \cdots & p_{n_1+3} \\
\vdots & \vdots & \vdots & & \vdots \\
p_{n_1+1} & p_{n_1+2} & p_{n_1+3} & \cdots & p_{2n_1+1}
\end{array}\right)
\end{equation*}	
are positive semidefinite, where $n_0=\lfloor \frac{n}{2}\rfloor$ and $n_1=\lfloor \frac{n-1}{2}\rfloor $. If both matrices are strictly positive definite then the sequence is a solution of the strict moment problem.
\end{theorem}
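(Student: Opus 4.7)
The plan is to treat the two directions separately, with necessity amounting to a direct calculation and sufficiency reducing to a positivity-of-functional argument paired with a Gauss-type quadrature construction.

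For necessity I would proceed by direct calculation. Suppose $(p_k)$ admits a representation as in \eqref{e:moment} with measure $\sigma\ge 0$ and $M\ge 0$. Fix $(s_k)_{k=0}^{n_0}\subset\mathbb{C}$ and set $q(t)=\sum_{k=0}^{n_0} s_k t^k$; substituting \eqref{e:moment} term by term yields
\[
\sum_{k,l=0}^{n_0} s_k\overline{s_l}\,p_{k+l} \;=\; \int_0^\infty |q(t)|^2\,d\sigma(t) \;+\; \varepsilon\,M|s_{n_0}|^2,
\]
where $\varepsilon=1$ if $2n_0=n$ and $\varepsilon=0$ otherwise (in the latter case no pair $(k,l)$ with $k,l\le n_0$ achieves $k+l=n$). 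Both summands are non-negative, so the first Hankel matrix is PSD; the analogous computation with $\int_0^\infty t|q(t)|^2\,d\sigma(t)$ handles the shifted matrix and finishes necessity.

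For sufficiency I would first define a linear functional $L\colon\mathcal{P}_n\to\mathbb{R}$ on the space of real polynomials of degree at most $n$ by $L(t^k)=p_k$. The Hankel PSD hypotheses translate exactly into $L(q^2)\ge 0$ for $\deg q\le n_0$ and $L(tr^2)\ge 0$ for $\deg r\le n_1$. Invoking the classical P\'olya--Szeg\H{o} factorization --- every polynomial of degree at most $n$ non-negative on $[0,\infty)$ decomposes as $q^2+tr^2$ with exactly those degree bounds --- yields $L\ge 0$ on the entire cone of polynomials in $\mathcal{P}_n$ non-negative on $[0,\infty)$. I would then realize $L$ by a measure using Gauss-type quadrature: in the strictly PD case, perform Gram--Schmidt on $\mathcal{P}_{\le n_0}$ with inner product $(u,v)\mapsto L(uv)$; for odd $n$ the three-term recurrence extends the orthogonal family to $P_{n_0+1}$, whose roots provide the Gauss nodes, while for even $n$ a Gauss--Radau rule fixing a node at $0$ (whose $n_0$ free nodes come from the orthogonal polynomial of degree $n_0$ with respect to the shifted functional $L(t\,uv)$) is used. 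Strict positivity of the first Hankel matrix ensures simple nodes, and strict positivity of the shifted matrix --- equivalent to positivity of multiplication by $t$ on $\mathcal{P}_{\le n_0}$ in the $L$-inner product --- forces them into $[0,\infty)$; Christoffel weights then produce an atomic $\sigma\ge 0$ on $[0,\infty)$ whose moments reproduce $p_0,\ldots,p_n$ exactly, so $M=0$.

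The hard part will be the boundary PSD-but-not-strictly-PD case. Here I would perturb $p_0\mapsto p_0+\epsilon$ (or a similarly chosen regularization) so that both Hankel matrices become strictly positive definite for each $\epsilon>0$, apply the construction above to obtain atomic measures $\sigma_\epsilon$, and extract a weak limit $\sigma$ as $\epsilon\to 0^+$; tightness is automatic from uniform boundedness of low-order moments. The identities $\int t^k\,d\sigma=p_k$ for $k<n$ pass to the limit because $t^k$ is bounded on uniformly confined supports, but $\int t^n\,d\sigma$ may drop strictly below $p_n$ due to possible escape of mass at infinity, and the slack $M:=p_n-\int t^n\,d\sigma\ge 0$ is exactly what the generalized formulation is designed to absorb. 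Controlling this weak limit rigorously --- in particular verifying $M\ge 0$ via lower semicontinuity of the top-moment map on $[0,\infty)$ and ruling out cancellation between the two Hankel matrix identities --- is the main technical obstacle of the argument.
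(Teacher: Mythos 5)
The paper offers no proof of this statement to compare against: Theorem~\ref{t:moment} is imported verbatim from Chapter V of \cite{KrNu1977} and used as a black box. Your outline is essentially the standard proof of this classical result (and close in spirit to the treatment in \cite{KrNu1977}): necessity via $\sum_{k,l}s_k\overline{s_l}p_{k+l}=\int_0^\infty|q|^2\,d\sigma+\varepsilon M|s_{n_0}|^2$ and its $t$-shifted analogue; sufficiency via the Riesz functional $L$, the Markov--Luk\'acs decomposition $p=A^2+tB^2$ with $\deg A\le n_0$, $\deg B\le n_1$ (which is what makes the two stated degree bounds exactly the right ones), and an atomic representing measure built by Gauss quadrature for odd $n$ and a Gauss--Radau rule anchored at $0$ for even $n$. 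The degree bookkeeping is sound: the orthogonality conditions defining the node polynomial only ever evaluate $L$ on polynomials of degree $\le n$, and strict positivity of the two Hankel forms is precisely what places the nodes in $(0,\infty)$ and makes the Christoffel weights positive, so the strictly positive definite case does yield $M=0$.

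Two points need repair. First, the concrete perturbation $p_0\mapsto p_0+\epsilon$ does not in general regularize the Hankel matrices: for $p\equiv 0$ and $n=2$ the perturbed matrix is $\mathrm{diag}(\epsilon,0)$, still singular. The standard fix, which your parenthetical gestures at but should be made explicit, is to perturb the whole sequence by $\epsilon$ times the moment sequence of a fixed measure with nondegenerate Hankel matrices, e.g.\ $p_k\mapsto p_k+\epsilon/(k+1)$ (Lebesgue measure on $[0,1]$); then each Hankel matrix is a positive semidefinite matrix plus a strictly positive definite one. Second, the limit step you flag as the main obstacle does close with exactly the tools you name, and this should be stated rather than left as a worry: all moments of $\sigma_\epsilon$ up to order $n$ are bounded uniformly in $\epsilon$, so for $k\le n-1$ one has $\int_{t\ge R}t^k\,d\sigma_\epsilon\le R^{-1}\int t^{k+1}\,d\sigma_\epsilon\le R^{-1}(p_{k+1}+O(\epsilon))$, giving uniform integrability of $t^k$ and hence $\int t^k\,d\sigma=p_k$ exactly in the weak limit; for $k=n$ only Fatou survives, yielding $\int t^n\,d\sigma\le p_n$ and thus $M\ge 0$. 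No ``cancellation between the two Hankel identities'' can arise, since each quadratic form is a sum of two separately nonnegative contributions. With these two repairs the argument is complete and correct.
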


\section{Characterization of PPT diagonal restricted Dicke states} 
\label{s:PPT}
As we observed in section \ref{s:separability}, in the multipartite case there is a family of conditions which are called PPT properties. They are indexed by the set of binary systems $(m_1,\ldots,m_N)$. One can check that in case of permutationally symetric states all $(m_1,\ldots,m_N)$-PPT conditions with fixed $m:=m_1+m_2+\ldots+m_N$, i.e. number of subsystems which are transposed, are equivalent (see Appendix B. 
Thus, it is enough to consider only PPT conditions with first $m$ subsystems transposed, where $m\leq\lfloor\frac{N}{2}\rfloor$.

We start with the following characterization of $m$-PPT property in terms of the associated Henkel matrices.
\begin{theorem}\label{t:PPT}
Let $m\leq \frac{N}{2}$. The state $\rho$ is $m$-PPT if and only if
\begin{enumerate}[(a)]
\item matrices $(p_{i+j})_{i,j=0}^{m(d-1)}$ and $(p_{i+j+1})_{i,j=0}^{m(d-1)-1}$ are positive definite, when $N=2m$,
\item matrices $(p_{i+j+l})_{i,j=0}^{m(d-1)}$, $l=0,\ldots,(N-2m)(d-1)$, are positive definite, when $2m<N$.
\end{enumerate}
\end{theorem}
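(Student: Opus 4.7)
The plan is to compute the partial transpose $T_m(\rho)$ explicitly in an adapted orthonormal basis, observe that its matrix decomposes into independent blocks labelled by an integer ``level'', and, after a diagonal rescaling, identify each block with a Hankel matrix drawn from $(p_k)_{k=0}^{N(d-1)}$. The $m$-PPT property then reduces to positive semidefiniteness of a family of such Hankel matrices, and enumerating the independent ones gives precisely (a) and (b). Concretely, I would first split each multi-index $\mathbf{i}$ into its first $m$ coordinates $\mathbf{a}$ and its last $N-m$ coordinates $\mathbf{b}$, and use that partial transposition on the first $m$ sites swaps $\mathbf{a}$ with $\mathbf{a}'$ while leaving $\mathbf{b},\mathbf{b}'$ untouched. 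Grouping the internal sums by the partial weights $a=|\mathbf{a}|$, $b=|\mathbf{b}|$ and recognising $\sum_{\mathbf{a}\vdash a}|\mathbf{a}\rangle=|R_{m,d;a}\rangle$ gives
\[
T_m\bigl(|R_{N,d;k}\rangle\langle R_{N,d;k}|\bigr)=\!\!\sum_{a+b=a'+b'=k}\!\! |R_{m,d;a'}\rangle\langle R_{m,d;a}|\otimes|R_{N-m,d;b}\rangle\langle R_{N-m,d;b'}|,
\]
so $T_m(\rho)$ is supported on the subspace spanned by the orthogonal system $|R_{m,d;i}\rangle\otimes|R_{N-m,d;j}\rangle$, $0\le i\le m(d-1)$, $0\le j\le(N-m)(d-1)$.

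Next, in the associated orthonormal basis $\{\hat R_i\otimes\hat R_j\}$ (obtained by normalising each restricted Dicke state), a direct calculation shows that the matrix element of $T_m(\rho)$ at row $(i,j)$ and column $(k,l)$ equals $p_{i+l}\sqrt{\omega_i\omega_k\tilde\omega_j\tilde\omega_l}$ when $i+l=k+j$ and vanishes otherwise, where $\omega_a={m\choose a}_d$ and $\tilde\omega_b={N-m\choose b}_d$ are strictly positive. The Kronecker condition forces $s:=i-j=k-l$, so the matrix of $T_m(\rho)$ is block-diagonal in the level $s$. Conjugating the $s$-block by the invertible positive diagonal matrix $\mathrm{diag}\bigl(\sqrt{\omega_i\tilde\omega_{i-s}}\bigr)$ strips away the square-root factors and leaves the bare Hankel matrix $(p_{i+k-s})_{i,k}$ with $i,k$ ranging over $[\max(0,s),\,\min(m(d-1),(N-m)(d-1)+s)]$. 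Hence $m$-PPT is equivalent to positive semidefiniteness of every such Hankel block.

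It remains to identify the genuinely independent blocks. A short computation shows that the $s$-block attains the full index range $\{0,1,\ldots,m(d-1)\}$ precisely when $s\in[-(N-2m)(d-1),\,0]$; setting $l=-s$, these full-size blocks are exactly the Hankel matrices $(p_{i+j+l})_{i,j=0}^{m(d-1)}$ for $l=0,1,\ldots,(N-2m)(d-1)$ listed in (b). Blocks with $s$ outside this interval have strictly smaller index range, and by a shift of indices each is a contiguous principal submatrix of a full-size block whose shift parameter $l$ has the correct parity ($l\equiv s\pmod 2$). If $N>2m$ both parities $l\in\{0,1\}$ lie inside $[0,(N-2m)(d-1)]$, so every non-maximal block is absorbed by (b). If $N=2m$ the interval collapses to $\{0\}$: the even-$s$ blocks are absorbed by $(p_{i+j})_{i,j=0}^{m(d-1)}$, but by parity the odd-$s$ blocks cannot be; the $s=1$ block supplies the second independent condition $(p_{i+j+1})_{i,j=0}^{m(d-1)-1}$, and all remaining odd-$s$ blocks are contiguous principal submatrices of it, yielding (a).

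The main obstacle is this last bookkeeping step: correctly delimiting the range of $s$ for which the Hankel block is ``maximal'', and then using a parity argument to verify that every non-maximal block embeds as a contiguous principal submatrix of one of the enumerated full-size Hankel matrices. The parity distinction is precisely what forces case (a) to require an independent odd-shift Hankel matrix in addition to the even-shift one, and it is the essential combinatorial content of the proof; everything else is a direct manipulation once the block-Hankel decomposition of $T_m(\rho)$ is in place.
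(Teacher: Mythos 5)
Your proposal follows essentially the same route as the paper's proof: an explicit block decomposition of the partial transpose indexed by the difference of the two partial weights, reduction of each block to a Hankel matrix built from $(p_k)$, and the observation that every non-maximal block is a contiguous principal submatrix of one of the listed full-size Hankel matrices of matching parity. The only caution is in the final bookkeeping: the non-maximal blocks whose shift exceeds $(N-2m)(d-1)$ embed into the full-size matrices with $l=(N-2m)(d-1)$ or $l=(N-2m)(d-1)-1$ (exactly as the paper does), not into those with $l\in\{0,1\}$, so your claim that they are ``absorbed by (b)'' is correct only if the embedding target is chosen at the top of the admissible range of $l$ rather than the bottom.
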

\begin{proof}
According to \eqref{rDd} one has
\begin{equation}
\label{e:rD}
\rho=\sum_{k=0}^{N(d-1)}p_k\;\sum_{\bi\vdash k}\; \sum_{\bj\vdash k} |\bi\rangle\langle\bj|.
\end{equation}
Let $\Gamma_m=T^{\otimes m}\otimes \id^{\otimes (N-m)} $ be the partial transposition with respect to $m$ first subsystems.
Then one can show (see Appendix C) 
that 
\begin{equation}
\label{e:Al}
\Gamma_m(\rho)=
\sum_{s=-m(d-1)}^{(N-m)(d-1)} A_s
\end{equation}
where 
\begin{widetext}
\begin{equation}
\label{e:As}
A_s= \sum_{k,l=\max\{0,-s\}}^{\min\{m(d-1),(N-m)(d-1)-s\}} p_{k+l+s}|R_{m,d;k},R_{N-m,d;k+s}\rangle \langle R_{m,d;l},R_{N-m,d;l+s}|.
\end{equation}
\end{widetext}
One can observe that $A_s$ are hermitian and $A_sA_{s'}=0$ for $s\neq s'$. Thus, $\Gamma_m(\rho)$ is positive definite if and only if each $A_s$ is positive definite. It is equivalent to positive definiteness of the following Hecke matrices 
\begin{equation}
\nonumber 
P_s=(p_{k+l+s})_{\max\{0,\,-s\}\leq k,\,l\leq \min\{m(d-1),\,(N-m)(d-1)-s\}}
\end{equation}
for all $s\in \{-m(d-1),-m(d-1)+1,\ldots, (N-m)(d-1)\}$.

If $A$ and $B$ are Hermitian matrices, then we will write $A\subset B$ if $\dim A\leq \dim B$, and $A$ is a principal submatrix of $B$. Obviously, if $A\subset B$ and $B$ is positive definite then $A$ is positive definite too.
 
Firstly, let us consider the case $2m=N$. We will show that if the matrices
$P_0=(p_{k+l})_{k,l=0}^{m(d-1)}$ 
and
$P_1=(p_{k+l+1})_{k,l=0}^{m(d-1)-1}$ 
are positive definite then $P_s$ are positive definite for all $s$ such that $-m(d-1)\leq s\leq m(d-1)$.
Assume that $s$ is any even number, i.e. $s=2q$. Then
\begin{equation}
\nonumber
P_{2q}= \begin{cases}
(p_{k+l+2q})_{k,l=0}^{m(d-1)-2q} & \mbox{if $q>0$,} \\[4mm]
(p_{k+l-2|q|})_{k,l=2|q|}^{m(d-1)} & \mbox{if $q<0$.}
\end{cases} 
\end{equation}
One observes that in both cases $P_{2q}=(p_{k+l})_{k,l=|q|}^{m(d-1)-|q|}\subset P_0$, hence $P_{2q}$ is positive definite.
For $s=2q+1$, we have
$$ 
P_{2q+1}=
(p_{k+l+2q+1})_{k,l=0}^{m(d-1)-2q-1} =(p_{k+l+1})_{k,l=q}^{m(d-1)-q-1}
$$ 
if $q\geq 0$, and
$$ 
P_{2q+1}=
(p_{k+l-2|q|+1})_{k,l=2|q|+1}^{m(d-1)} = (p_{k+l+1})_{k,l=|q|+1}^{m(d-1)-|q|}
$$ 
if $q<0$.
In both cases $P_{2q+1}\subset P_1$. 

Now, let us consider the case $2m<N$. Assume that $P_s$ are positive definite for all $s$ such that $0\leq s\leq (N-2m)(d-1)$. For $s$ such that $-m(d-1)\leq s <0$, one can use the same arguments as above 
to show that $P_s\subset P_0$ or $P_s\subset P_1$. If $(N-2m)(d-1)<s\leq (N-m)(d-1)$ then $s=(N-2m)(d-1)+2q$ or $l=(N-2m)(d-1) +2q+1$. One has
\begin{eqnarray*}
\lefteqn{
P_{(N-2m)(d-1)+2q} =}\\ & = &
(p_{k+l+(N-2m)(d-1)+2q})_{k,l=0}^{m(d-1)-2q}\\
&=&(p_{k+l+(N-2m)(d-1)})_{k,l=q}^{m(d-1)-q}
\subset P_{(N-2m)(d-1)}
\end{eqnarray*}
and
\begin{eqnarray*}
\lefteqn{
P_{(N-2m)(d-1)+2q+1}=} \\ & = & 
(p_{k+l+(N-2m)(d-1)+2q+1})_{k,l=0}^{m(d-1)-2q-1}\\ 
&=&(p_{k+l+(N-2m)(d-1)-1})_{k,l=q+1}^{m(d-1)-q}\subset P_{(N-2m)(d-1)-1}
\end{eqnarray*}
Thus the proof is complete.
\end{proof}


\section{Separability of diagonal restricted Dicke states vs. moment problem}
\label{s:sepDicke}
We start with the following observation.
\begin{lem}
\label{p:geom}
If $(p_k)_{k=0,1,\ldots,N(d-1)}$ is a geometric sequence then $\rho$ given by \eqref{rDd} is fully separable.
\end{lem}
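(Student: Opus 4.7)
The plan is to exhibit $\rho$ as a phase-average of the pure separable D-symmetric states classified in Proposition \ref{p:Dpure}. Assuming $p_{0}>0$ (otherwise $\rho=0$ is trivially separable), write $p_{k}=p_{0}r^{k}$ with $r\geq 0$. For each $z\in\mathbb{C}$ introduce the (unnormalized) product vector $|\xi_{z}\rangle=\sum_{i=0}^{d-1}z^{i}|i\rangle$, so that
\begin{equation*}
|\xi_{z}\rangle\langle\xi_{z}|^{\otimes N}=\sum_{\bi,\bj} z^{|\bi|}\,\overline{z}^{|\bj|}\,|\bi\rangle\langle\bj|
\end{equation*}
is a fully separable pure state for every $z$; this is nothing but the unnormalized version of the state $\sigma_{z}$ appearing in Section \ref{s:D-wit}.

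The key computation is then the phase-average along the circle $z=\sqrt{r}\,e^{i\theta}$: the coefficient of $|\bi\rangle\langle\bj|$ becomes $r^{(|\bi|+|\bj|)/2}e^{i(|\bi|-|\bj|)\theta}$, which integrates to $r^{|\bi|}\delta_{|\bi|,|\bj|}$. Grouping terms by $k=|\bi|=|\bj|$ yields $\sum_{\bi,\bj\vdash k}|\bi\rangle\langle\bj|=|R_{N,d;k}\rangle\langle R_{N,d;k}|$, so
\begin{equation*}
\frac{p_{0}}{2\pi}\int_{0}^{2\pi}|\xi_{\sqrt{r}e^{i\theta}}\rangle\langle\xi_{\sqrt{r}e^{i\theta}}|^{\otimes N}\,d\theta=\sum_{k=0}^{N(d-1)}p_{0}r^{k}|R_{N,d;k}\rangle\langle R_{N,d;k}|=\rho.
\end{equation*}

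Finally, since the integrand is a trigonometric polynomial in $\theta$ of degree at most $N(d-1)$, the rectangle rule on $[0,2\pi]$ with any $M\geq 2N(d-1)+1$ equispaced nodes is exact. Consequently $\rho$ equals a finite convex combination of pure product states $|\xi_{z_{m}}\rangle\langle\xi_{z_{m}}|^{\otimes N}$ with $z_{m}=\sqrt{r}\,e^{2\pi im/M}$, which establishes full separability without invoking any closure argument on the separable cone. The only obstacle is the routine index bookkeeping in the phase-integral calculation; conceptually the argument is the standard observation that averaging over the $U(1)$ phase of $z$ projects the operator onto its $|\bi|=|\bj|$ sector, which is exactly where the diagonal restricted Dicke operators are supported.
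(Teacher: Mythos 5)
Your proof is correct and follows essentially the same route as the paper: both average the separable product states $|\xi_z\rangle\langle\xi_z|^{\otimes N}$ with $z=\sqrt{r}\,e^{i\theta}$ over the phase to project onto the $|\bi|=|\bj|$ sector. The paper simply performs the discretization directly, summing over the $N(d-1)+1$ roots of unity $\omega^\alpha$ (which already suffices for exactness), whereas you pass through the continuous integral and then invoke the rectangle rule.
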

\begin{proof}
Let $p_k=t^k$ for $k=0,1,\ldots,N(d-1)$ and for some $t>0$. By $\omega$ we denote the $N(d-1)$-th root of $1$, i.e. $\omega=\exp\dfrac{2\pi i}{N(d-1)+1}$.	
Consider the following "$t$-dual" vectors to the computational basis with respect to the discrete Fourier transform
\begin{equation}
|\hat{\alpha}\rangle = \sum_{i=0}^{d-1}t^{i/2} \omega^{\alpha i}|i\rangle, \qquad \alpha=0,1,\ldots,N(d-1).
\end{equation}
Now, observe that
\begin{eqnarray*}
\lefteqn{\dfrac{1}{N(d-1)+1}\sum_{\alpha=0}^{N(d-1)}|\hat{\alpha}\rangle\langle\hat{\alpha}|^{\otimes N}=} \\
& = & 
\dfrac{1}{N(d-1)+1}\sum_{\alpha=0}^{N(d-1)}\;\sum_{\bi,\bj}t^{\frac{1}{2}(|\bi|+|\bj|)}\,\omega^{\alpha(|\bi|-|\bj|)}|\bi\rangle\langle \bj| \\
&=&
\dfrac{1}{N(d-1)+1}\sum_{\bi,\bj} t^{\frac{1}{2}(|\bi|+|\bj|}
\sum_{\alpha=0}^{N(d-1)}\left(\omega^{|\bi|-|\bj|}\right)^\alpha
|\bi\rangle\langle\bj| \\
&=&
\sum_{\bi,\bj} \delta_{|\bi|,|\bj|}\; t^{\frac{1}{2}(|\bi|+|\bj|} |\bi\rangle\langle\bj| 
= 
\sum_{\substack{\bi,\bj \\ |\bi|=|\bj|}} t^{|\bi|}|\bi\rangle\langle \bj| \\
&=&
\sum_{k=0}^{N(d-1)}\; t^k \;\sum_{\bi\vdash k}\; \sum_{\bj\vdash k} |\bi\rangle\langle \bj|=
\sum_{k=0}^{N(d-1)} t^k \, |R_{N,d;k}\rangle\langle R_{N,d;k}|\\
& = &\rho
\end{eqnarray*}
Thus $\rho$ is separable.
\end{proof}

We are in the position to formulat the following characterization 
\begin{theorem}
\label{t:Narb}
Let $d\geq 2$ and $N$ be arbitrary.
The state $\rho$ given by \eqref{rDd} is fully separable if and only the sequence $(p_k)_{k=0}^{N(d-1)}$ is a solution of the generalized moment problem.
\end{theorem}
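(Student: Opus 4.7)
I would prove Theorem~\ref{t:Narb} as two separate implications, exploiting Lemma~\ref{p:geom} for the sufficiency direction and the rigidity results of Propositions~\ref{sepper} and~\ref{p:Dpure} for the necessity direction.

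\textbf{Sufficiency.} Assume that $(p_k)$ solves the generalized moment problem, so for some positive measure $\sigma$ on $[0,\infty)$ and some $M\geq 0$ one has $p_k=\int_0^\infty t^k\,d\sigma(t)$ for $k<N(d-1)$ and $p_{N(d-1)}=\int_0^\infty t^{N(d-1)}\,d\sigma(t)+M$. For each fixed $t\geq 0$, Lemma~\ref{p:geom} supplies an explicit product-state decomposition of $\sum_k t^k|R_{N,d;k}\rangle\langle R_{N,d;k}|$. Integrating this decomposition against $d\sigma(t)$ yields a separable operator whose $k$-th Dicke coefficient is exactly $\int t^k\,d\sigma$. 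Since $|R_{N,d;N(d-1)}\rangle=|d-1\rangle^{\otimes N}$ is itself a product vector, adding $M|R_{N,d;N(d-1)}\rangle\langle R_{N,d;N(d-1)}|$ preserves separability and supplies the missing mass at the top index, reconstructing $\rho$.

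\textbf{Necessity.} Conversely, suppose $\rho$ is fully separable. By Propositions~\ref{sepper} and~\ref{p:Dpure}, combined with the D-symmetry of $\rho$, every separable decomposition takes the form $\rho=\sum_\alpha\lambda_\alpha|\xi_\alpha\rangle\langle\xi_\alpha|^{\otimes N}$, where each $|\xi_\alpha\rangle$ is either equal to $|d-1\rangle$ or to $C_{z_\alpha}\sum_{i=0}^{d-1}z_\alpha^i|i\rangle$ for some $z_\alpha\in\mathbb{C}$. The next step is a phase-averaging trick: the diagonal unitaries $U_\theta=\mathrm{diag}(1,e^{i\theta},\ldots,e^{i(d-1)\theta})$ satisfy $U_\theta^{\otimes N}|R_{N,d;k}\rangle=e^{ik\theta}|R_{N,d;k}\rangle$, so $U_\theta^{\otimes N}\rho(U_\theta^{\otimes N})^{\dagger}=\rho$. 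Averaging the separable decomposition over $\theta\in[0,2\pi)$ sends each parameter $z_\alpha\mapsto z_\alpha e^{i\theta}$ and, after integration, eliminates all cross-terms $|\bi\rangle\langle\bj|$ with $|\bi|\neq|\bj|$. A direct computation then shows that each non-trivial $|\xi_\alpha\rangle$ contributes $\lambda_\alpha C_{|z_\alpha|}^{2N}\sum_k|z_\alpha|^{2k}|R_{N,d;k}\rangle\langle R_{N,d;k}|$, while the $|d-1\rangle$-terms aggregate into a single multiple $M\,|R_{N,d;N(d-1)}\rangle\langle R_{N,d;N(d-1)}|$. Setting $t_\alpha=|z_\alpha|^2$ and $\sigma=\sum_\alpha\lambda_\alpha C_{|z_\alpha|}^{2N}\delta_{t_\alpha}$, one then reads off the desired representation $p_k=\int_0^\infty t^k\,d\sigma(t)+M\delta_{k,N(d-1)}$.

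\textbf{Main obstacle.} The sufficiency is essentially a single integration against Lemma~\ref{p:geom}. The subtler part is the necessity direction: one must verify that the phase-averaged decomposition still captures each diagonal coefficient correctly and—crucially—that the constant $M$ arises precisely from the exceptional $|d-1\rangle$ summands. In moment-problem language these correspond to an atom ``at infinity'' which cannot be absorbed into any finite measure on $[0,\infty)$, and this is exactly why a separate additive term $M$ at $k=N(d-1)$ must appear in the generalized moment problem. Once this dichotomy is made explicit, the reconstruction of $\sigma$ and $M$ from the $\lambda_\alpha$'s and $z_\alpha$'s is a direct calculation.
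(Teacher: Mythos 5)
Your sufficiency argument is exactly the paper's: integrate the product decomposition from Lemma~\ref{p:geom} against $d\sigma$ and add the product state $M|d-1\rangle\langle d-1|^{\otimes N}$ for the top coefficient. Your necessity argument, however, takes a genuinely different route. The paper pairs $\rho$ with the witnesses $V_{(s)}$ and $U_{(t)}$ of Proposition~\ref{DEW} to deduce positive semidefiniteness of the two Hankel matrices $(p_{k+l})$ and $(p_{k+l+1})$, and then invokes the nontrivial (existence) direction of Theorem~\ref{t:moment} to produce a representing measure. You instead use the structure results (Propositions~\ref{sepper} and~\ref{p:Dpure}) to reduce any pure-product decomposition of the D-symmetric state $\rho$ to vectors of the form $|d-1\rangle$ or $C_{z_\alpha}\sum_i z_\alpha^i|i\rangle$, and then twirl over $U_\theta=\mathrm{diag}(1,e^{i\theta},\ldots,e^{i(d-1)\theta})$ --- which fixes $\rho$ since $U_\theta^{\otimes N}|R_{N,d;k}\rangle=e^{ik\theta}|R_{N,d;k}\rangle$ --- to project each summand onto $C_{|z_\alpha|}^{2N}\sum_k|z_\alpha|^{2k}\,|R_{N,d;k}\rangle\langle R_{N,d;k}|$; comparing coefficients of the linearly independent operators $|R_{N,d;k}\rangle\langle R_{N,d;k}|$ yields the atomic measure $\sigma=\sum_\alpha\lambda_\alpha C_{|z_\alpha|}^{2N}\delta_{|z_\alpha|^2}$ on $[0,\infty)$ and identifies $M$ as the total weight of the $|d-1\rangle$ summands. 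This is correct and arguably more economical: it constructs $\sigma$ explicitly and bypasses both Proposition~\ref{DEW} and the existence half of the Krein--Nudelman theorem, at the modest cost of invoking the full strength of Propositions~\ref{sepper}--\ref{p:Dpure} (which the paper proves anyway) and of the standard remark that every separable state admits a decomposition into pure product states. The paper's witness route has the compensating advantage that it produces the Hankel positivity conditions directly, which is the form later matched against the PPT conditions of Theorem~\ref{t:PPT}.
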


\begin{proof}
\textit{Necessity.}
If $\rho$ is separable then $\Tr(\rho W)\geq 0$ for every entanglement witness for D-symmetric systems.
For a system $(s_k)_{0\leq k\leq \lfloor N(d-1)/2\rfloor}$ of complex numbers let us consider the entanglemnet witness $V_{(s)}$ given by \eqref{Ws} in Proposition \ref{DEW}. Then we have
\begin{eqnarray*}
\Tr(\rho V_{(s)})&=&\sum_{k,l=0}^{\lfloor N(d-1)/2\rfloor}s_k\overline{s_l}\sum_{j=0}^{N(d-1)}p_j|\langle\widetilde{R_{N,d;k+l}},R_{N,d;j}\rangle|^2 \\
&=&\sum_{k,l=0}^{\lfloor N(d-1)/2\rfloor}s_k\overline{s_l}p_{k+l}
\end{eqnarray*}
Since it is nonnegative for every choice of complex numbers $(s_k)$, the Henkel matrix $(p_{k+l})_{0\leq k,l\leq \lfloor N(d-1)/2\rfloor}$ is positive semidefinite. Similarly, using the witness $U_{(t)}$ given by \eqref{Wt} one can show that the Henkel matrix $(p_{k+l+1})_{0\leq k,l\leq \lfloor (N(d-1)-1)/2\rfloor}$ is positive definite too. Hence, due to Theorem \ref{t:moment} we conclude that the sequence $(p_k)$ is a solution of the generalized moment problem.

\textit{Sufficiency.}
Assume that $(p_k)$ is a solution of the generalized moment problem. Thus, there are a positive measure $\sigma$ supported on $[0,\infty)$ and a constant $M\geq 0$ such that conditions described in \eqref{e:moment} are satisfied. For $t\geq 0$, let 
\begin{equation}
\nonumber
\rho_t=\sum_{k=0}^{N(d-1)} t^k |R_{N,d;k}\rangle\langle R_{N,d;k}|
\end{equation}
Then, it follows from \eqref{e:moment} that
\begin{equation}
\nonumber
\rho=\int_0^\infty\rho_t d\sigma(t) + M|R_{N,d;N(d-1)}\rangle\langle R_{N,d;N(d-1)}|
\end{equation} 
Observe that 
$
|R_{N,d;N(d-1)}\rangle\langle R_{N,d;N(d-1)}|=|d-1\rangle\langle d-1|^{\otimes N},
$
so it is a fully separable state.
Moreover, according to Proposition \ref{p:geom}, each $\rho_t$ is also a separable state. Consequently, $\rho$ is separable too.
\end{proof}


\section{Separability of diagonal restricted Dicke states vs PPT property}
\label{s:sepPPT}
Our aim is to prove the following main theorem
\begin{theorem}\label{t:Neven}
Assume that $d\geq 2$ is arbitrary and $N$ is even. Let $\rho$ be a state given by \eqref{rDd}. The following conditions are equivalent:
	\begin{enumerate}[(a)]
		\item $\rho$ is fully separable
		\item $\rho$ is $m$-PPT for $m=\frac{N}{2}$
		\item The sequence $(p_k)_{k=0}^{N(d-1)}$ is a solution of the generalized moment problem.
	\end{enumerate}
Moreover, if $d=2$ and $N$ is odd the above conditions are also equivalent for $m=\frac{N-1}{2}$.
\end{theorem}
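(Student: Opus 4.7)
The theorem amounts to a dictionary between the algebraic PPT characterization (Theorem \ref{t:PPT}) and the Hankel-matrix characterization of the generalized moment problem (Theorem \ref{t:moment}), bridged by Theorem \ref{t:Narb}. The implication (a)$\Rightarrow$(b) is immediate, since any fully separable state is $m$-PPT for every $m$, and the equivalence (a)$\Leftrightarrow$(c) is precisely Theorem \ref{t:Narb}, valid for arbitrary $N$ and $d$. Thus the only work left is to prove (b)$\Leftrightarrow$(c) in the two regimes in which the theorem asserts it, and this reduces to matching indices.

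For $N=2m$ even and arbitrary $d\geq 2$, Theorem \ref{t:PPT}(a) states that $\rho$ is $m$-PPT precisely when the two Hankel matrices
\begin{equation*}
(p_{i+j})_{i,j=0}^{m(d-1)}, \qquad (p_{i+j+1})_{i,j=0}^{m(d-1)-1}
\end{equation*}
are positive semidefinite. Applying Theorem \ref{t:moment} to the sequence $(p_k)_{k=0}^{N(d-1)}$ with $n=N(d-1)=2m(d-1)$ yields $n_0=\lfloor n/2\rfloor=m(d-1)$ and $n_1=\lfloor (n-1)/2\rfloor=m(d-1)-1$, so the pair of Hankel matrices arising in the moment problem is literally identical to the pair above. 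Hence (b)$\Leftrightarrow$(c) in this case.

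For $d=2$ with $N$ odd and $m=(N-1)/2$, one has $d-1=1$ and $(N-2m)(d-1)=1$, so Theorem \ref{t:PPT}(b) reduces to positive semidefiniteness of the two $(m{+}1)\times(m{+}1)$ matrices $(p_{i+j})_{i,j=0}^{m}$ and $(p_{i+j+1})_{i,j=0}^{m}$. The moment problem with $n=N=2m+1$ gives $n_0=n_1=m$, producing exactly the same pair, and again (b)$\Leftrightarrow$(c) follows.

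No step presents a real obstacle; the proof is essentially index bookkeeping, with all substantive content pushed into the earlier theorems. It is however worth remarking on why the restriction to $N$ even (for $d\geq 3$) is essential: for $d\geq 3$ with $N$ odd and $m=\lfloor N/2\rfloor$, Theorem \ref{t:PPT}(b) produces strictly more than two Hankel conditions, so the $m$-PPT condition is a priori strictly stronger than the moment-problem condition, and one expects the clean equivalence with separability to break down in that regime.
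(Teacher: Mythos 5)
Your proof is correct and follows essentially the same route as the paper's: (a)$\Rightarrow$(b) is immediate, (c)$\Rightarrow$(a) comes from Theorem \ref{t:Narb}, and (b)$\Rightarrow$(c) is the index match between Theorem \ref{t:PPT} and Theorem \ref{t:moment}; your version merely spells out the bookkeeping that the paper leaves implicit. The closing remark about why $N$ odd fails for $d\geq 3$ is a nice addition consistent with the paper's explicit $N=3$, $d=3$ counterexample.
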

Let us note	 that the above equivalence was proved for $d=2$, i.e. for qubits, in \cite{Yu2016}.

\begin{proof}[Proof of Theorem \ref{t:Neven}]
Assume that $N$ is even. Implication (a)$\Rightarrow$(b) is obvious. Implication (b)$\Rightarrow$(c) a consequence of part (i) of Theorem \ref{t:PPT} and Theorem \ref{t:moment}. The implication (c)$\Rightarrow$(a) is nothing but the ``if'' part of Theorem \ref{t:Narb}.

Now, let $d=2$, $N$ be odd and $N=2m+1$. One should show the implication (b)$\Rightarrow$(c). Due to part (ii) of Theorem \ref{t:PPT}, $m$-PPT property of $\rho$ is equivalent to positive definiteness of matrices $(p_{i+j})_{i,j=0}^m$ and $(p_{i+j+1})_{i,j=0}^m$. The rest follows from Theorem \ref{t:moment}.
\end{proof}

	On the contrary to the case $d=2$, if $N$ is odd then $\frac{N-1}{2}$-PPT property does not imply full separability of $\rho$ for $d\geq 3$. We provide the following counterexample.

Let  $N=3$ and $d=3$. 
Consider a state $\rho=\sum_{k=0}^6 p_k|D_{3,3;k}\rangle\langle D_{3,3;k}|$ where 
$$(p_k)_{k=0}^6=\left( 1,\frac{1}{4},\frac{1}{8},\frac{1}{9}, \frac{1}{8},\frac{1}{4},1\right).$$
Observe that $\rho$ is a 1-PPT state. Indeed, one can easily check that matrices
$$
\left(\begin{array}{ccc}
p_0 & p_1 & p_2 \\ p_1 & p_2 & p_3 \\ p_2 & p_3 & p_4
\end{array}\right)
\quad
\left(\begin{array}{ccc}
p_1 & p_2 & p_3 \\ p_2 & p_3 & p_4 \\ p_3 & p_4 & p_5
\end{array}\right)
\quad
\left(\begin{array}{ccc}
p_2 & p_3 & p_4 \\ p_3 & p_4 & p_5 \\ p_4 & p_5 & p_6
\end{array}\right)
$$
are positive semidefinite. According to Theorem \ref{t:PPT}(b) it follows that $\rho$ has  1-PPT property.
On the other hand one checks that the determinant of a matrix
$$
\left(\begin{array}{cccc}
p_0 & p_1 & p_2 & p_3 \\ p_1 & p_2 & p_3 & p_4 \\ p_2 & p_3 & p_4 & p_5 \\ p_3 & p_4 & p_5 & p_6
\end{array}\right)
$$
is negative, hence it is not positive semidefinite. It follows from Theorem \ref{t:Narb} that $\rho$ is not separable.



\vspace{3mm}

{\it Acknowledgments}---
AR acknowledges  grant No.2014/14/M/ST2/00818 from the National Science Center.




\section*{Appendix}
\stepcounter{section}

\subsection{Proof of Proposition \ref{DEW}}
\label{a:VU}
Since $P_\mathrm{D}|\widetilde{R_{N,d;k}}\rangle =|\widetilde{R_{N,d;k}}\rangle$ for every $k$, it is clear that both $V_{(s)}$ and $U_{(t)}$ are D-symmetric. To complete the proof we show that $\Tr(V_{(s)}\rho)\geq 0$ and $\Tr(U_{(t)}\rho)\geq 0$ for every pure separable D-symmetric state $\rho$. It follows from Proposition \ref{p:Dpure} that 
$\rho=|\xi\rangle\langle\xi|^{\otimes N}$ where $\xi\in\mathbb{C}^d$ is either of the form \eqref{Dsep1} or \eqref{Dsep2}. If 
$|\xi\rangle=|d-1\rangle$, then $\rho=|d-1\rangle\langle d-1|^{\otimes N}=|\widetilde{R_{N,d;N(d-1)}}\rangle\langle \widetilde{R_{N,d;N(d-1)}}|$. Thus
\begin{eqnarray*}
\Tr(V_{(s)}\rho)
&=& \langle \widetilde{R_{N,d;N(d-1)}}|V_{(s)}|\widetilde{R_{N,d;N(d-1)}}\rangle 
\\&=&\begin{cases}
|s_{n_1}|^2 & \mbox{if $N(d-1)$ is even,} \\
0 &\mbox{if $N(d-1)$ is odd,}
\end{cases}
\end{eqnarray*}
and 
\begin{eqnarray*}
\Tr(U_{(t)}\rho)
&=& \langle \widetilde{R_{N,d;N(d-1)}}|U_{(t)}|\widetilde{R_{N,d;N(d-1)}}\rangle 
\\&=&\begin{cases}
0 & \mbox{if $N(d-1)$ is even,} \\
|s_{n_2}|^2 &\mbox{if $N(d-1)$ is odd,}
\end{cases}
\end{eqnarray*}
Notice, that in all cases we got nonnegative numbers

If $|\xi\rangle$ is of the form \eqref{Dsep2} then $\rho=C_z^N \sum_{\bi,\bj}z^{|\bi|}\overline{z}^{|\bj|}|\bi\rangle\langle\bj|$ for some $z\in\mathbb{C}$. Hence
\begin{eqnarray*}
\lefteqn{\Tr(V_{(s)}\rho)=}\\
&=& C_z^N \sum_{\bi,\bj} z^{|\bi|}\overline{z}^{|\bj|}\sum_{k,l=0}^{n_1}s_k\overline{s_l}\langle\bj| \widetilde{R_{N,d;k+l}}\rangle \langle \widetilde{R_{N,d;k+l}}|\bi\rangle 
\\&=&
C_z^N \sum_{k,l=0}^{n_1}s_k\overline{s_l}|z|^{2k+2l}\\
&=&
C_z^N \left|\sum_{k=0}^{n_1}s_k|z|^{2k}\right|^2\geq 0.
\end{eqnarray*}
Analogously, one shows that 
$$\Tr(U_{(t)}\rho)=C_z^N |z|^2\left|\sum_{k=0}^{n_2}t_k|z|^{2k}\right|^2\geq 0.$$

\subsection{PPT property for permutationally symmetric states}
\label{s:perPPT}
The aim of this subsection is to show that for permutationally symmetric states the PPT property depends only on the number of subsystems being transposed.
More precisely, let $(m_1,\ldots,m_N)$ and $(m_1',\ldots,m_N')$ be two binary systems such that $m_1+\ldots m_N=m_1'+\ldots+m_N'$. We show that a permutationally symmetric state $\rho$ is $(m_1,\ldots,m_N)$-PPT state if and only if it is $(m_1',\ldots,m_N')$-PPT. 
To this end, for a permutation $\sigma\in S_N$ denote by $F_\sigma$ an operator on $(\mathbb{C}^d)^{\otimes N}$ acting as $F_{\sigma} |\xi_1,\ldots,\xi_N\rangle=|\xi_{\sigma^{-1}(1)},\ldots, \xi_{\sigma(N)}\rangle$. 

Let $\rho$ has $(m_1,\ldots,m_N)$-PPT property. Since $m_1+\ldots m_N=m_1'+\ldots+m_N'$, there is a permutation $\sigma$ of the set $\{1,\ldots,N\}$ such that $m_k'=m_{\sigma(k)}$ for every $k=1,\ldots,N$. Since $\rho$ is permutationally symmetric, $F_\sigma\rho F_\sigma^* =\rho$. Thus, one gets
\begin{eqnarray*}
T^{m_1'}\otimes\ldots\otimes T^{m_N'}\rho
&=& T^{m_1'}\otimes\ldots\otimes T^{m_N'} F_\sigma \rho F_\sigma^* \\
&=&
T^{m_{\sigma(1)}}\otimes\ldots\otimes T^{m_{\sigma(N)}} F_\sigma\rho F_\sigma^* \\
&=&
F_{\sigma}\big[T^{m_1}\otimes\ldots\otimes T^{m_N} \rho\big]F_\sigma^*.
\end{eqnarray*}
Finally, as $T^{m_1}\otimes\ldots\otimes T^{m_N} \rho$ is positive semidefinite, $T^{m_1'}\otimes\ldots\otimes T^{m_N'}\rho$ is positive semidefinite too.
\subsection{Proof of equality \eqref{e:Al}}
\label{a:Al}
Let $\rho$ be giben by \eqref{e:rD} and $\Gamma_m$ be the partial transposition of the first $m$ subsystems. For an $N$-tuple $\bi$ let $\bi_m$ and $\bi^m$ denote respectively the $m$-tuple of first $m$ coordinates of $\bi$ and the $(N-m)$-tuple of last coordinates of $\bi$, i.e. $\bi_m=(i_1,\ldots,i_m)$ and $\bi^m=(i_{m+1},\ldots,i_N)$. Then one has
\begin{widetext}
\begin{eqnarray*}
\Gamma_m(\rho)&=&
\sum_{k=0}^{N(d-1)}p_k\;\sum_{\bi\vdash k}\; \sum_{\bj\vdash k} |\bj_m,\bi^m\rangle\langle \bi_m,\bj^m|\\
&=&
\sum_{k=0}^{N(d-1)}p_k \sum_{\substack{\bi,\bj \\ |\bi_m|+|\bj^m| =k \\ |\bj_m|+|\bi^m| =k}} |\bi_m,\bi^m\rangle\langle\bj_m,\bj^m| \\
&=&
\sum_{\substack{\bi,\bj \\ |\bi^m| - |\bi_m|=|\bj^m| -|\bj_m|}}
p_{|\bi_m|+|\bj^m|} |\bi_m,\bi^m\rangle\langle\bj_m,\bj^m| \\
&=&
\sum_{s=-m(d-1)}^{(N-m)(d-1)}\sum_{\substack{\bi_m,\bj_m \\ \max\{0,-s\}\leq |\bi_m|,|\bj_m|\leq \min\{m(d-1),(N-m)(d-1)-s\} 
}}
p_{|\bi_m|+|\bj_m|+s} |\bi_m\rangle\langle\bj_m| \otimes \\
&& \otimes \sum_{\substack{\bi^m,\bj^m \\ |\bi^m|=|\bi_m|+l \\ |\bj^m|=|\bj_m|+l }} |\bi^m\rangle\langle \bj^m|\\
&=& 
\sum_{s=-m(d-1)}^{(N-m)(d-1)}\quad\sum_{k,l=\max\{0,-s\}}^{\min\{m(d-1),(N-m)(d-1)-s\}}p_{k+l+s}\sum_{\substack{\bi_m,\bj_m \\ |\bi_m|=k \\ |\bj_m|=l}} |\bi_m\rangle\langle\bj_m| \otimes \sum_{\substack{\bi^m,\bj^m \\ |\bi_m|=k+s \\ |\bj_m|=l+s}}|\bi^m\rangle\langle \bj^m|
\\
&=&
\sum_{s=-m(d-1)}^{(N-m)(d-1)}\quad\sum_{k,l=\max\{0,-s\}}^{\min\{m(d-1),(N-m)(d-1)-s\}}p_{k+l+s}
|R_{m,d;k},R_{N-m,d;k+s}\rangle\langle R_{m,d;l},R_{N-m,d;l+s}| 
\\
&=&
\sum_{s=-m(d-1)}^{(N-m)(d-1)} A_s,
\end{eqnarray*}
where $A_s$ are defined as in \eqref{e:As}.
\end{widetext}






\end{document}